\let\csname equation*\endcsname=\relax 
\let\csname endequation*\endcsname=\relax 
\theoremstyle{plain}
\newtheorem{theorem}{Theorem}[section]
\newtheorem{lemma}{Lemma}
\theoremstyle{definition}
\theoremstyle{remark}
\newtheorem{remark}{Remark}[section]
\newcommand{\CO}[1]{\mathcal{O}\left(#1\right) }
\newcommand{\eps}{\epsilon}
\newcommand{\bbc}{\mathbb{C}}
\newcommand{\bbe}{\mathbb{E}}
\newcommand{\cE}{\mathcal{E}}
\newcommand{\ef}[1]{\exp\left( #1 \right)}
\begin{document}

\title{Some Error Analysis for the Quantum Phase Estimation Algorithms }
\author{Xiantao Li\\
  Department of Mathematics,\\
  The Pennsylvania State University, \\
  University Park, PA 16802, USA\\ 
{XLi@math.psu.edu}}
\date{}

\maketitle
\begin{abstract}
This paper is concerned with the phase estimation algorithm in quantum computing, especially the scenarios where (1) the input vector is not an eigenvector; (2) the unitary operator is approximated by  Trotter or Taylor expansion methods; (3)  random approximations are used for the unitary operator. We characterize the probability of computing the phase values in terms of the consistency error, including the residual error, Trotter splitting error, or statistical mean-square error. In the first two cases, we show that in order to obtain the phase value  with {error less or equal to  $2^{-n}$ } and probability at least $1-\eps$, the required number of qubits is  $ t \geq  n + \log \big(2 + \frac{\delta^2 }{2 \eps \Delta\!E^2 } \big).$ The parameter $\delta$ quantifies the error associated with the inexact eigenvector and/or the unitary operator, and $\Delta\! E$ characterizes the spectral gap, i.e., the separation from the rest of the phase values.  This analysis generalizes the standard result \cite{cleve1998quantum,nielsen2002quantum} by including these effects. More importantly, it shows that when $\delta<\Delta\!E$, the complexity remains the same. {  For the third case, we found a similar estimate, but the number of random steps has to be sufficiently large.}
\end{abstract}

\section{Introduction}

There has been a recent surge of development in quantum simulation algorithms for scientific computing problems, especially those that pose great challenges for classical computers. 
These developments also created demands for precise error estimates.  The phase estimation algorithm \cite{kitaev1995quantum}, which is an important application of the quantum Fourier transform, has been a  crucial element in many quantum algorithms \cite{cleve1998quantum,nielsen2002quantum}. Specific examples include Shor's algorithm  \cite{shor1999polynomial}, amplitude estimation \cite{brassard2002quantum},  quantum Metropolis sampling \cite{temme2011quantum}, 
 state preparation \cite{rendon2021effects}, the solution of large-scale linear system of equations 
 \cite{harrow2009quantum} and some nonlinear problems \cite{tansuwannont2019quantum}. In addition, it also has direct applications in quantum chemistry \cite{bian2019quantum,whitfield2011simulation,berni2015ab,poulin2018quantum,babbush2018encoding,bauman2020toward,lin2022heisenberg,wan2021randomized}. The algorithm has been included in various software packages \cite{wille2019ibm,casares2021t}.  \cite{berni2015ab} \cite{xiao2019continuous}

The phase estimation algorithm is often illustrated for a unitary operator $U$ using its eigenvector $\ket{\psi}$ as the input. By applying Hadamard gates, together with controlled-$U$ gates, the algorithm maps the binary bits of the phase variable to the computational basis, which can then be extracted using the inverse quantum Fourier transform \cite{nielsen2002quantum}. 

An insightful complexity analysis has been outlined in \cite{nielsen2002quantum}, which provides a lower bound for the number of required qubits, $ t \geq  n + \log \big(2 + \frac{ 1 }{ 2\eps  } \big)$,  in order to obtain $n$ correct bit values with success probability at least $1-\eps$. Later, a slightly improved estimate has been obtained in \cite{chappell2011precise}. In practice, however, the eigenvector $\ket{\psi}$ of $U$ is often unknown. Any approximation of $\ket{\psi}$  inevitably introduces error, which will be  subsequently propagated through the phase estimation method. In \cite{nielsen2002quantum}, this is incorporated into the complexity analysis in terms of the overlap with an eigenvector. However, such overlap might not be known in advance. Furthermore, of equal importance is quantifying the error in the computed phase value. In most problems in quantum chemistry, the unitary operator $U$ corresponding to a Hamiltonian $H$ can not be implemented exactly. Thus, the approximation of $U$ will also cause numerical error, and quantifying such error is important in understanding the effectiveness of the phase estimation method. Although the accur{acy in the approximation of $U$ has been extensively studied \cite{childs2019nearly,childs2021theory,childs2019faster,faehrmann2021randomizing,an2021time}, how such error affects the phase estimation algorithm remains an open issue. In addition, connecting the phase estimation error to the approximation error of $U$ will in turn help to determine the precision/complexity that is needed in the Hamiltonian simulation algorithms.  In the context of randomized algorithms \cite{campbell2019random,faehrmann2021randomizing,ouyang2020compilation}, the added stochasticity will further complicate the issue.   

The main purpose of this paper is to provide some a priori analysis of such error.
We will regard the above mentioned inexact treatment of $U$ and/or $\ket{\psi}$ as perturbations of the phase estimation procedure. For clarity, we separate these possible perturbations and analyze their effects, but also noting that in practice, these perturbations can occur simultaneously. 
One interesting observation by Hastings and coworkers \cite{hastings2014improving} is that an approximation of the unitary operator $U$ often corresponds to the dynamics of a modified Hamiltonian $H$, with a perturbation that is comparable to the local Trotter error. Therefore, the accuracy of the computed eigenvalues and phases should be connected to such error. 
{ The Trotter error, in terms of the operator norm, has been used as an upper bound on the perturbation of the eigenvalues of the unitaries \cite{campbell2021early,lin2022heisenberg,Kivlichan2020}, which is useful to assess the overall gate complexity.   } 
Our analysis confirms this observation. We express the error in terms of the residual error $\delta$. For an inexact eigenvector, the residual error, unlike the overlap with an exact eigenvalue, is computable. For approximate unitary operators, this error can be replaced by the local consistency error of a Trotter splitting scheme, which is proportional to $\tau^p$ with $\tau$ being the step size and $p$ being the order of accuracy. Our analysis also highlights the importance of the spectral gap $\Delta\!E$: Phase values that are well separated from others are much easier to compute. By taking into account the residual error, we prove a new bound:  $ t \geq  n + \log \big(2 + \frac{\delta^2 }{2 \eps \Delta\!E^2 } \big).$

Meanwhile, for a random Trotter scheme, such as the QDRIDFT method \cite{campbell2019random}, and later extensions \cite{jin2021partially,ouyang2020compilation},  the analysis  becomes more subtle: the modified Hamiltonian can be viewed as a Monte Carlo sampling of the Hamiltonian terms \cite{jin2021partially}. Although statistically, the random application of $U$ is accurate, one realization of such algorithms can only give a good approximation with certain probability that depends on the number of random steps.     Finally, we point out that although we only consider the basic phase estimation algorithm, we expect that similar analysis can also be applied to improved methods, such as the statistical approach \cite{xiao2019continuous,moore2021statistical}, 
the Cosine tapering window approach \cite{rendon2021effects}, and  the block encoding approach \cite{daskin2018direct}.

Throughout the paper, we will use the following notations. $\| \bullet \|$ denotes a vector norm, as well as the induced matrix norm. For each $N\in \mathbb{N}$, $[N]=\{1,2,\cdots, N\}.$ For a matrix $A$, we use $\sigma(A)$ to denote the set of eigenvalues.
Since the eigenvalues appear as the phases, we let $\mu(A)$ be the set of phases and we restrict $\mu(A) \subset [0,1).$  
Furthermore, to compare two phase values $\omega_1$ and $\omega_2$, we follow the notation in \cite{mosca1999quantum,brassard2002quantum} and define 
\begin{equation}
    \text{dist}(\omega_1, \omega_2):= \min_{z\in \mathbb{Z}} \abs{\omega_1 - \omega_2 + z}. 
\end{equation}
In particular, this distance is always bounded by $1/2.$

It will also be useful to measure the distance between a phase value and a subset of $\mu(A)$. Toward this end, we extend the previous definition: For a set  $S \subset [0,1)$, and $\omega \in [0,1)$;
\begin{equation}
    \textrm{dist}(\omega, S ) := \min_{s \in S} \text{dist}(\omega, s). 
\end{equation}

Another useful tool is the Jordan's inequality,
\begin{equation}\label{jordan}
    \sin (x) \ge \frac{2}{\pi} x, \quad \forall \; x \in [0, \frac{\pi}2].
\end{equation}

\section{The Phase Estimation Method}
We consider the basic phase estimation algorithm \cite{cleve1998quantum,nielsen2002quantum}.
This method starts with a quantum register with $t$ qubits, together with an input quantum state $\ket{\psi}.$ Let 
\begin{equation}
U=\ef{iH},
\end{equation}
be the unitary operator associated with the Hamiltonian dynamics, the goal is to extract an eigenvalue of $H$. $H\in \bbc^{d\times d}$; $d$ represents the dimension of the quantum system.  An ideal scenario is when $\ket{\psi}$ is an eigenvector  associated with a phase value $ \cE \in \mu(H)$, 
\begin{equation}\label{eq: eigp}
    H \ket{\psi} = 2\pi \cE \ket{\psi}, \; \text{and} \;  U \ket{\psi} =  e^{ 2 \pi i  \cE} \ket{\psi}.
\end{equation}
$\cE \in [0,1)$ will be referred to as an eigenphase of the unitary operator $U$. 

The standard approach consists of 
applying the Hadamard gates, followed by an inverse quantum Fourier transform. Following the notations in \cite{daskin2018direct}, we divide the procedure into the following steps.
\begin{align}
 \ket{\phi_0} = & \ket{0^{\otimes t}} \ket{\psi}, \label{eq: qpe0} \\
 \ket{\phi_1}= & \dfrac{1}{2^{t/2}}  \bigotimes_{k=0}^{t-1} \left( \ket{0} + \ket{1}  \right)  \ket{\psi}, \\ 
 \ket{\phi_2}= & \dfrac{1}{2^{t/2}}  \bigotimes_{k=0}^{t-1} \left( \ket{0} + e^{2\pi i  2^k \cE} \ket{1}  \right)  \ket{\psi},\\
  \ket{\phi_3} = &  \dfrac{1}{2^{t}} \sum_{k=0}^{2^t-1}\sum_{j=0}^{2^t-1}  e^{2\pi i  k (\cE - j/2^t)} \ket{j}  \ket{\psi}. \label{eq: qpe3}
\end{align}
 
 The result from the second step can also be conveniently written in the computational basis as,
 \begin{equation}
 \ket{\phi_2}=   \dfrac{1}{2^{t/2}}  \sum_{k=0}^{2^t-1}  e^{2\pi i  k \cE} \ket{k}   \ket{\psi}.
\end{equation} 
Consequently $\ket{\phi_3}$ follows from an inverse Fourier transform.

An analysis of the complexity has been shown in \cite{cleve1998quantum,nielsen2002quantum}. The first step is to designate $b \in [N]$ as an approximation of the eigenvalue, in the sense that
\begin{equation}
    0 \leq \nu:= \cE - b/2^t  \leq 2^{-t}.
\end{equation}

The principal idea in the analysis is to express the state at the last step as,
\begin{equation}
     \ket{\phi_3} =    \sum_{j=0}^{2^t-1}  \alpha_j  \ket{(j+b)\text{mod}2^t }  \ket{\psi}
\end{equation}
In particular, we have the coefficients given by \cite{mosca1999quantum},
\begin{equation}
    \abs{\alpha_j} = \frac{\abs{\sin 2^t \pi \beta_j } }{ 2^t \abs{\sin \pi \beta_j} }, \quad \beta_j= \nu - \frac{j}{2^t}.
\end{equation}
Let $m$ be the measured outcome. The probability of $m$ being close to $b$ can be bounded by considering the coeffients,
\begin{equation}
    \mathbb{P} \left( |m - b| > \ell \right) \le 
     \sum_{-2^{-t+1} < j < -\ell } \abs{\alpha_j}^2 
     + 
     \sum_{ \ell < j \leq  2^{t-1} } \abs{\alpha_j}^2.
\end{equation}
Here $\ell$ is an integer.  It has been shown in \cite{nielsen2002quantum} that,
\begin{equation}\label{eq: 15}
 \sum_{-2^{-t+1} < j < -\ell } \abs{\alpha_j}^2  + 
  \sum_{ 
  \ell < {j} \leq  2^{t-1} } \abs{\alpha_j}^2 \leq \frac{1}{2(\ell-1)}.
\end{equation}
Similar analysis, but in the context of amplitude amplification, has been done in \cite{brassard2002quantum}.

By setting $\ell =2^{t-n}$, the estimated phase will have accuracy within $2^{-n}$. In addition,  $\eps = \frac{1}{2(\ell-1)} $ on the right hand side of Eq. \eqref{eq: 15} can be viewed as a bound for the failure probability. Combining these two equations,  one obtains an estimate for the number of qubits $t$, given a tolerance $\eps$.  
\begin{theorem}[Nielsen and Chuang \cite{nielsen2002quantum} ]\label{thm1}
Assume that the initial state is $\ket{0^{\otimes t}}\ket{\psi}$ with $\ket{\psi}$ being the eigenvector of $H$ associated with the eigenvalue $2\pi \cE$. For any $n\in \mathbb{N}$ and $\eps > 0$, the phase value $\cE$ can be obtained with {error less or equal to $2^{-n}$ and} with probability at least $1-\eps$, if one chooses the number of qubits,
\begin{equation}
    t \geq n + \log \left(2+\frac{1}{2\eps} \right).
\end{equation}
\end{theorem}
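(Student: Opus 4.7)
The plan is to argue directly from the expression for $\ket{\phi_3}$ in equation (\ref{eq: qpe3}), and reduce the proof to bounding a tail sum of squared amplitudes via Jordan's inequality, then solve for $t$ in terms of $n$ and $\eps$.

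First I would pick $b\in\{0,1,\dots,2^t-1\}$ so that $b/2^t$ is the best $t$-bit truncation of $\cE$, i.e.\ $0 \le \delta := \cE - b/2^t \le 2^{-t}$. Re-indexing the sum in $\ket{\phi_3}$ by $j \mapsto (j+b)\bmod 2^t$, the state becomes $\sum_j \alpha_j \ket{(j+b)\bmod 2^t}\ket{\psi}$, and collapsing the geometric sum $\sum_k e^{2\pi i k\beta_j}$ with $\beta_j=\delta-j/2^t$ yields the stated closed form $|\alpha_j|=|\sin(2^t\pi\beta_j)|/(2^t|\sin(\pi\beta_j)|)$. Since measuring $m$ with $|m-b|\le \ell$ (modulo $2^t$) means the observed fractional value $m/2^t$ lies within $\ell/2^t$ of $\cE$, taking $\ell=2^{t-n}$ guarantees $n$ correct binary digits whenever the measurement lies in this window.

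Next I would bound the failure probability $\mathbb{P}(|m-b|>\ell)$ by the two tail sums of $|\alpha_j|^2$. In the denominator, because $|\pi\beta_j|\le \pi/2$ on the relevant index range, Jordan's inequality (\ref{jordan}) gives $|\sin(\pi\beta_j)|\ge 2|\beta_j|$, so $|\alpha_j|^2 \le 1/(2^{t+1}\beta_j)^2$. Using $2^t\beta_j = 2^t\delta - j$ with $0\le 2^t\delta\le 1$, one gets $|2^t\beta_j|\ge |j|-1$ on the tails, hence each tail sum is dominated by $\tfrac14\sum_{k>\ell}(k-1)^{-2}$, which telescopes/integrates to at most $1/(4(\ell-1))$. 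Adding the two tails gives the quoted bound $1/(2(\ell-1))$.

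Finally, setting $\eps = 1/(2(\ell-1))$ with $\ell = 2^{t-n}$, I would solve for $t$: the equation $2^{t-n} = 1 + 1/(2\eps)$ becomes $t = n + \log\bigl(1 + 1/(2\eps)\bigr)$, and the slight inflation to $\log\bigl(2+1/(2\eps)\bigr)$ absorbs the rounding of $\ell$ to an integer and the $-1$ in $\ell-1$. I expect the main obstacle to be the careful tail estimate: one has to verify Jordan's inequality applies uniformly on the summation range (i.e.\ $|\beta_j|\le 1/2$), handle the asymmetric indexing $-2^{t-1}<j\le 2^{t-1}$ introduced by the mod-$2^t$ reduction, and keep the constant in the tail comparison sharp enough to land exactly on $1/(2(\ell-1))$ rather than a weaker multiple. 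Once this is in place, the logarithmic inversion in the final step is routine.
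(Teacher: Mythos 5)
Your proposal is correct and follows essentially the same route as the paper, which itself reproduces the standard Nielsen--Chuang argument: choose $b$ as the best $t$-bit truncation, write $\ket{\phi_3}$ with the geometric-sum coefficients $\alpha_j$, bound the two tails of $\sum |\alpha_j|^2$ by $\tfrac{1}{2(\ell-1)}$ via Jordan's inequality, and invert $\ell=\CO{2^{t-n}}$ to get the qubit count. You in fact supply more detail on the tail estimate than the paper, which simply cites \cite{nielsen2002quantum} for that bound; your bookkeeping of $\ell$ versus $\ell-1$ differs trivially from the paper's but lands on the same $t \geq n + \log\left(2+\tfrac{1}{2\eps}\right)$.
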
 

One approach to improve the accuracy is to use amplitude amplification, where one works with a function $f: \{0,1, \cdots, 2^{t}-1\} \to \{0,1\}.$ For the current problem, one can define $f(j)=1,$ if $j=b$ and zero otherwise. Then the majority algorithm in \cite{brassard2011optimal} can improve the probability to $1-1/m$ by only repeating the algorithm $\CO{\log m}$ times.

\bigskip

The first problem that comes up is that the input state $\ket{\psi}$ is not an eigenvector. To gain insight, we express $U$ in a spectral decomposition form,
\begin{align}\label{eq: U-SD}
    U= \sum_{m=1}^d e^{2\pi i \cE_m} \ketbra{\psi_m}.
\end{align}

After applying the controlled-$U$ gate, we find that,
\begin{align}
\ket{\phi_2}=& \dfrac{1}{2^{t/2}}   \sum_{k=0}^{2^t-1}   \sum_{m=1}^d  e^{2\pi i k \cE_m } \braket{\psi_m}{\psi} \ket{k}\ket{\psi_m},\\
\ket{\phi_3}= &  \dfrac{1}{2^{t}} \sum_{k=0}^{2^t-1} \; \sum_{j=0}^{2^t-1}  e^{-2\pi i  k   j/2^t} \ket{j} \sum_{m=1}^d  e^{2\pi i k \cE_m  } \braket{\psi_m}{\psi}  \ket{\psi_m}.\label{phi3}
\end{align}

We extend the analysis from the previous section, by writing $\ket{\phi_3}$ as follows,
\begin{equation}
  \ket{\phi_3} = \sum_{m=1}^d  \sum_{j=0}^{2^t-1}  \alpha_{j,m}  \ket{(j+b)\text{mod}2^n }  \ket{\psi_m},
\end{equation}
where we have defined the coefficients,
\begin{equation}
    \alpha_{j,m}= \sum_{k=0}^{2^t-1}  e^{2\pi i  k (\cE_m - (b+j) )  /2^t)} \braket{\psi_m}{\psi} 
\end{equation}
Therefore, the problem is reduced to estimating the magnitude of  $\alpha_{j,m}$.  Observing that this is also a geometric series, we rewrite them as,
\begin{equation}\label{eq: alphajm}
    \abs{\alpha_{j,m}} = \frac{\abs{\sin 2^t \pi \theta_{j,m} } }{ 2^n \abs{\sin \pi \theta_{j,m} } } \abs{\braket{\psi_m}{\psi}}, \quad \theta_{j,m}:= \cE_m - b - \frac{j}{2^t}.
\end{equation}

Intuitively, when there is significant overlap between $\ket{\psi}$ and 
$\ket{\psi_m}$ for some $m$, then the phase estimation method will obtain $\cE_m$ with high probability \cite{nielsen2002quantum}. Unfortunately, such overlap is not accessible unless one diagonalizes the matrix $H$, which is generally not practical. 
For a Hermitian matrix, a standard approach to examine whether a vector is close to being an eigenvector is to evaluate the residual error \cite{saad2011numerical}. 

\begin{lemma}[\cite{saad2011numerical}]\label{lem-r}
 Let $H \in \bbc^{d\times d}$ be a Hermitian operator. Let  $a\in \mathbb{R}$, and $\ket{\psi} \in \bbc^d$  with norm 1. Suppose that,
 \[   \norm{ H\ket{\psi} - a\ket{\psi} } < \delta, \]
for some $\delta>0.$ Then there exists an eigenvalue  $E \in \sigma(H)$ with eigenvector $\ket{\chi}$, such that $|E-a| < \delta.$ In addition, let 
\begin{equation}\label{eq: dE}
    \Delta E= 
    \min_{\overset{\lambda\in \sigma(H)}{\lambda \neq E} } 
 \norm{a -  \lambda}.
\end{equation}
Then  the overlap between the approximate and the exact eigenvectors satisfies the following bound,
\begin{equation}\label{eq: ovlp1}
    \abs{ \braket{\psi}{\chi} }^2 \geq 1 - 
    \frac{\delta^2}{\Delta\!E^2}.
 \end{equation}
 
\end{lemma}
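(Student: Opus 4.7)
The plan is to work in the eigenbasis of $H$. Since $H$ is Hermitian, write an orthonormal eigendecomposition $H = \sum_k \lambda_k \ketbra{\chi_k}$ and expand the input state as $\ket{\psi} = \sum_k c_k \ket{\chi_k}$ with $\sum_k |c_k|^2 = 1$. The residual then has the clean form
\begin{equation*}
\|H\ket{\psi} - a\ket{\psi}\|^2 = \sum_k |c_k|^2 (\lambda_k - a)^2,
\end{equation*}
and both conclusions follow by comparing this sum against the hypothesized bound $\delta^2$.

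For the first assertion (existence of a nearby eigenvalue), I would argue by contradiction: if $|\lambda_k - a| \geq \delta$ for every $k \in \sigma(H)$, then factoring out the minimum gives $\sum_k |c_k|^2 (\lambda_k - a)^2 \geq \delta^2$, contradicting the residual hypothesis. Hence some $\lambda_{k^*}$ satisfies $|\lambda_{k^*} - a| < \delta$; set $E := \lambda_{k^*}$ and let $\ket{\chi}$ be the associated eigenvector (or, if $E$ has multiplicity greater than one, the normalization of the projection of $\ket{\psi}$ onto the eigenspace for $E$, so that $|\braket{\psi}{\chi}|^2$ equals the squared norm of that projection).

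For the overlap bound, I would split the sum into the contribution from eigenvalues equal to $E$ and the rest:
\begin{equation*}
\delta^2 > \sum_{\lambda_k = E} |c_k|^2 (E-a)^2 + \sum_{\lambda_k \neq E} |c_k|^2 (\lambda_k - a)^2
\geq \sum_{\lambda_k \neq E} |c_k|^2 \, \Delta E^2,
\end{equation*}
where the second inequality uses the definition of $\Delta E$ from \eqref{eq: dE}. Since the total amplitude squared is $1$, this rearranges to $\sum_{\lambda_k = E} |c_k|^2 \geq 1 - \delta^2/\Delta E^2$, which equals $|\braket{\psi}{\chi}|^2$ under the choice of $\ket{\chi}$ above. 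The hypothesis $\delta < \Delta E$ ensures the right-hand side is strictly positive so that the normalization is legitimate.

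There is no real obstacle here; the only care point is the treatment of degenerate eigenvalues of $E$, which is resolved by taking $\ket{\chi}$ as the normalized projection onto the full $E$-eigenspace so that $|\braket{\psi}{\chi}|^2$ captures the entire contribution of eigenvalues equal to $E$ rather than a single eigenvector amplitude.
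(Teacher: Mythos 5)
Your proof is correct. Note that the paper itself does not prove this lemma at all --- it is quoted from Saad's book \cite{saad2011numerical} --- so there is no in-paper argument to compare against; your eigenbasis expansion $\|H\ket{\psi}-a\ket{\psi}\|^2=\sum_k |c_k|^2(\lambda_k-a)^2$, the contradiction argument for the existence of $E$ with $|E-a|<\delta$, and the split of the sum over $\lambda_k=E$ versus $\lambda_k\neq E$ constitute the standard and complete proof. Your handling of the degenerate case, taking $\ket{\chi}$ to be the normalized projection of $\ket{\psi}$ onto the full $E$-eigenspace so that $|\braket{\psi}{\chi}|^2=\sum_{\lambda_k=E}|c_k|^2$, is exactly the right care point, and the hypothesis $\delta<\Delta E$ indeed guarantees that this projection is nonzero (and, as a side remark, forces $E$ to be the unique eigenvalue within $\delta$ of $a$, so the choice of $E$ in the first part is immaterial whenever the second part is non-vacuous).
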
 

It is worthwhile to point out that $\Delta E$ in Eq. \eqref{eq: dE} characterizes the separation of $E$ from the rest of the eigenvalues. In classical algorithms \cite{saad2011numerical}, a small separation will typically cause large error in solving the eigenvalue problem, { especially when eigenvectors are desired.  The role of the spectral gap is also observed in the recent work on quantum algorithms for ground state preparations \cite{lin2022heisenberg,lin2020near}.   }

The lower bound in Eq. \eqref{eq: ovlp1} also implies, in terms of the spectral decomposition \eqref{eq: U-SD}, that
\begin{equation}\label{eq: summ}
    \sum_{\overset{1\leq m'\leq d}{m'\neq m}} \abs{\braket{\psi_{m'}}{\psi}}^2 \leq \frac{\delta^2}{\Delta E^2}.
\end{equation}

\begin{theorem}\label{eq: thm2}
   Under the condition that  $\norm{ H\ket{\psi} - a\ket{\psi} } < \delta$ for some $a \in [0,2\pi)$, and some sufficiently small $\delta$. 
   The quantum algorithm \eqref{eq: qpe0} -- \eqref{eq: qpe3} returns an estimation within success probability $1-\eps$ by using the number of qubits given by,
   \begin{equation}
       t \geq  n + \log \big(2 + \frac{\delta^2 }{2 \eps \Delta E^2 } \big). 
   \end{equation}
\end{theorem}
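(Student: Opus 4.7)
The plan is to reduce to the perfect-eigenvector case by decomposing $\ket{\psi}$ in the spectral basis of $U$, and then to combine two ingredients that are already in hand: the residual estimate of Lemma~\ref{lem-r} and the Dirichlet-kernel tail bound already used to prove Theorem~\ref{thm1}. First I would apply Lemma~\ref{lem-r} to produce an index $m^*$ for which the eigenphase $\cE_{m^*}=E/(2\pi)$ satisfies $|E-a|<\delta$, together with the overlap bounds
\begin{equation*}
|\braket{\psi_{m^*}}{\psi}|^2 \geq 1-\frac{\delta^2}{\Delta E^2}, \qquad \sum_{m\neq m^*}|\braket{\psi_m}{\psi}|^2 \leq \frac{\delta^2}{\Delta E^2},
\end{equation*}
the latter being exactly \eqref{eq: summ}. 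I would then pick $b\in\{0,\dots,2^t-1\}$ approximating $2^t\cE_{m^*}$ to within one integer, exactly as in the unperturbed algorithm.

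Next I would use the representation of $\ket{\phi_3}$ in \eqref{phi3} together with the coefficient formula \eqref{eq: alphajm}, set $\ell=2^{t-n}$, and split the failure event $\{|m_{\text{out}}-b|>\ell\}$ by eigenindex:
\begin{equation*}
\mathbb{P}(\text{fail}) \;=\; \sum_{|j|>\ell}|\alpha_{j,m^*}|^2 \;+\; \sum_{m\neq m^*}\sum_{|j|>\ell}|\alpha_{j,m}|^2.
\end{equation*}
Because \eqref{eq: alphajm} factors as a Dirichlet kernel multiplied by $|\braket{\psi_m}{\psi}|$, the first sum inherits the same estimate as in the proof of Theorem~\ref{thm1}, namely at most $|\braket{\psi_{m^*}}{\psi}|^2/(2(\ell-1))$. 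For the second sum the unitarity identity $\sum_j|\alpha_{j,m}|^2=|\braket{\psi_m}{\psi}|^2$ dominates the truncated tail by the full sum, and \eqref{eq: summ} then bounds the whole off-diagonal piece by $\delta^2/\Delta E^2$. Requiring the total to be at most $\eps$ and using the "sufficiently small $\delta$" hypothesis to absorb the residual constants into a fraction of $\eps$ gives a lower bound on $\ell-1$ of the order $\delta^2/(2\eps\Delta E^2)$, which after the integer rounding already used for Theorem~\ref{thm1} translates into the claimed logarithmic bound on $t$.

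The main obstacle is the off-diagonal sum: when $m\neq m^*$, the mass of $|\alpha_{j,m}|^2$ is concentrated around $j\approx 2^t\cE_m-b$, which lies far outside the window $|j|\leq \ell$ because $|\cE_m-\cE_{m^*}|\geq \Delta E/(2\pi)$, so the local Dirichlet-kernel tail estimate that works for $m=m^*$ is inapplicable and one is forced onto the global Parseval/overlap bound. It is precisely Lemma~\ref{lem-r} that makes this global bound come out proportional to $\delta^2/\Delta E^2$ rather than something larger; tracking the two-term estimate into the precise logarithmic form of the statement is then a purely arithmetic step and is where the "sufficiently small $\delta$" clause has to be invoked.
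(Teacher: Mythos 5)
Your proposal follows the same skeleton as the paper's proof: invoke Lemma~\ref{lem-r} to locate the eigenphase $\cE_{m^*}$ near $a/(2\pi)$ and to get the overlap bound \eqref{eq: summ}, choose $b$ as the best $t$-bit approximation of $\cE_{m^*}$, and split the failure probability by eigenindex, handling the $m=m^*$ block exactly as in Theorem~\ref{thm1}. The genuine difference is the off-diagonal block. The paper applies the pointwise bound $\abs{\alpha_{j,m'}}^2\le \tfrac{1}{4}(2^t\delta_{m'}-j)^{-2}\abs{\braket{\psi_{m'}}{\psi}}^2$ to every $m'$ and then sums the tail as in Theorem~\ref{thm1}, obtaining $\delta^2/(2\Delta E^2(\ell-1))$; you instead give up the tail gain and use Parseval, $\sum_j\abs{\alpha_{j,m'}}^2=\abs{\braket{\psi_{m'}}{\psi}}^2$, obtaining $\delta^2/\Delta E^2$. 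Your reason is sound and is in effect a criticism of the paper's own step: for $m'\ne m^*$ the kernel $\abs{\alpha_{j,m'}}^2$ peaks near $j\approx 2^t\cE_{m'}-b$, outside the window $\abs{j}\le\ell$, so the tail estimate $\le 1/(2(\ell-1))$ (which needs $\abs{2^t\delta_{m'}}\le 1$) is not available, and those modes can contribute essentially their full weight $\abs{\braket{\psi_{m'}}{\psi}}^2$ to the failure event. Your treatment of that block is the defensible one.

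The weak point is your final arithmetic. With the Parseval bound the off-diagonal contribution $\delta^2/\Delta E^2$ is independent of $\ell$, so it cannot be traded against $\ell$: requiring $\tfrac{1}{2(\ell-1)}+\tfrac{\delta^2}{\Delta E^2}\le\eps$ forces the standalone condition $\delta^2/\Delta E^2<\eps$ (this is where ``sufficiently small $\delta$'' truly enters) and then $\ell-1\ge \tfrac{1}{2}\big(\eps-\delta^2/\Delta E^2\big)^{-1}$, i.e.\ a Theorem~\ref{thm1}-type count $t\ge n+\log\big(2+\tfrac{1}{2(\eps-\delta^2/\Delta E^2)}\big)$ --- not ``$\ell-1$ of order $\delta^2/(2\eps\Delta E^2)$'' and not the displayed formula of the theorem (which, as $\delta\to0$, degenerates to $t\ge n+1$, even though the exact-eigenvector case already requires $n+\log(2+\tfrac{1}{2\eps})$). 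So you must either keep your rigorous off-diagonal bound and accept a different final formula, or supply an argument recovering a $1/(\ell-1)$ decay for the off-diagonal tail --- which, as you yourself observe, is not there.
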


This estimate generalizes the one obtain in Theorem \ref{thm1} by incorporating the residual error and the eigenphase separation.

\begin{proof}

Following Lemma \ref{lem-r}, let $\cE_m \in \mu(H)$ such that $\abs{2\pi \cE_m - a} < \delta.$  This index $m$ is held fixed hereafter. 
 Starting with Eq. \eqref{eq: alphajm}, we choose $b$ so that $ \delta_m := \abs{2^{-t} b-\cE_m},$  and $ \delta_m  \le 2^{-t}.$
  We first observe that 
 \begin{align*}
    &    |\alpha_{j,m'}|^2 \leq \frac{1}{4 (2^n \delta_{m'} - j)^2 } \abs{\braket{\psi_{m'}}{\psi}}^2 \\
     \Longrightarrow&   
     \sum_{m'\neq m}   |\alpha_{j,{m'}}|^2  
       \leq  \frac{1}{4 (2^n \delta_{m} - j)^2 } \frac{\delta^2}{\Delta E^2},\\
        \Longrightarrow& 
        \mathbb{P}\left(\abs{m'-b} > \ell \right) \leq
          \frac{\delta^2}{2 \Delta E^2 (\ell-1) }.
 \end{align*}
 In the second step, we have used the inequality \eqref{eq: summ}.      The rest follows from the proof of Theorem \ref{thm1}.

\end{proof}

\begin{remark}
  One interesting scenario is when several phase values are clustered around $\cE_m$ into a very small interval. In this case $\Delta\!E$ is very small and the estimate given above is rather pessimistic. On the other hand, since we are only computing the phase values, rather than the eigenvectors, we can extend this analysis by including the multiple phase values: {When the cluster size is less than $2^{-n}$, all the values in the cluster are good approximations. } From this perspective, as indicated in Figure \ref{default}, $\Delta E$ should be regarded as the distance between the small cluster and the rest of the phase values. 
\end{remark}

{
\begin{remark}
The residual error can be computed in advance, and it calculation can be done fairly efficiently when the Hamiltonian matrix is sparse. On the other hand, the calculation of the spectral gap is not as straightforward. One possible approach to probe the eigenvalue distribution is by computing the density of states (DoS) \cite{Lin2016,saad2011numerical}.  

\end{remark}
}

\begin{figure}[htbp]
\begin{center}
\includegraphics[scale=0.76]{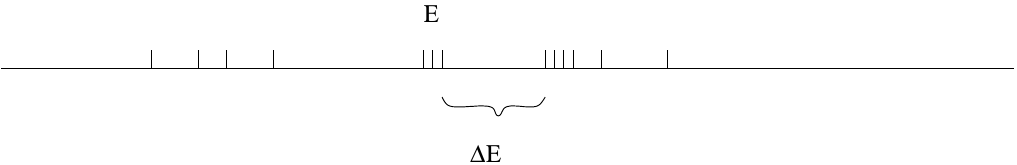}
\caption{An illustration of the phase separation: $\Delta E$ represents the separation of a small cluster of phase values around $E$ from the rest of the phase values. }
\label{default}
\end{center}
\end{figure}

\section{Inexact Unitary Operators}

Another important, but common, scenario is when the unitary operator can not be implemented exactly. For instance, the total Hamiltonian can be written as the sum of  $L$ local Hamiltonians,
\begin{equation}
    H = \sum_{\ell=1}^L h_\ell,
\end{equation}
and one can assume that  the unitary operator for each  Hamiltonian term $h_\ell$ can be implemented exactly  \cite{Lloyd1996}. {Such algorithms are known as Hamiltonian simulations, and they have been a rapidly growing area of interest \cite{childs2021theory,poulin_trotter_2014,babbush_chemical_2015,Berry2015,Low2017,Low2018,Low2019,campbell2019random,Berry2020,ouyang2020compilation,Chen2021,Su2021,An2021,an2021time,Watkins2022}.
 } Among the many  methods that have been developed to approximate the unitary operator $U=\exp(iH)$, perhaps the simplest method is the Trotter splitting. One first regards the computation of $U$ as a time integration, and divides the time interval into $r$ segments, with step size 
\begin{equation}
0< \tau:=1/r \ll 1.
\end{equation}
For each time integration, the unitary operator is approximated by,
\begin{equation}\label{eq: V-split}
     \ef{i \tau H} \approx \prod_{j} \ef{i \tau c_j h_j},  
\end{equation}
where the product is composed of terms $\ef{i \tau c_j h_j}$ with each $j \in [L]$ and each $c_j$ being a real coefficient.   One such example is the second-order symmetric splitting,
\begin{equation}\label{strang}
     \ef{i \tau H} \approx \ef{i \frac{\tau}2 h_1}
      \cdots \ef{i \frac{\tau}2 h_{L-1}} \ef{i \tau h_L}\ef{i \frac{\tau}2 h_{L-1}} \cdots \ef{i \frac{\tau}2 h_1} + \CO{\tau^3}.
\end{equation}

We denote the approximation in \eqref{eq: V-split}, applied repeatedly for $r$ steps, simply by $V$:
\begin{equation}
V=  \Big( \prod_{j} \ef{i \tau c_j h_j} \Big)^r.
\end{equation}

Since we are using an approximate unitary operator, we postulate the following standard local order condition \cite{deuflhard2002scientific},
\begin{equation}\label{eq: local-err}
    \norm{ U^{1/r} \ket{\psi} - V^{1/r} \ket{\psi} }\leq C \tau^{p+1},
\end{equation}
for some $p \in \mathbb{N}$ and for any normalized wave function $\ket{\psi} \in \bbc^d.$ The constant $C$ is independent of the step size $\tau$. By using the fact that both $U$ and $V$ are unitary, one can extend \eqref{eq: local-err} to,
\begin{equation}\label{eq: local-err-k}
    \norm{ U^{k/r} \ket{\psi} - V^{k/r} \ket{\psi} }\leq C k \tau^{p+1}, \quad \forall k \in \mathbb{N}.
\end{equation}
In particular, 
\[ \norm{ U \ket{\psi} - V \ket{\psi} }\leq C \tau^{p}.
\]

Let us express $V$ in a spectral decomposition form as well,
\begin{align}\label{eq: V-SD}
    V= \sum_{m=1}^d e^{2\pi i E_m} \ketbra{\chi_m}.
\end{align}
 
This draws a comparison of the spectral decomposition for the exact unitary \eqref{eq: U-SD} and the approximate one \eqref{eq: V-SD}. We first show that the error bound  \eqref{eq: local-err} also implies a bound for the eigenphase approximation. 
\begin{lemma}\label{lmm2}
Assume that the local error \eqref{eq: local-err} is sufficiently small. Let $\ket{\psi}$ be an eigenvector of $H$ with eigenvalue ${2\pi \cE}.$  Then we have the eigenvalue bound,
\begin{equation}\label{eq: mmin}
     \min_{1 \leq m \leq d} 
     \mathrm{dist}(\cE, E_m )  \leq \frac12 C \tau^{{p}}.
\end{equation}

Furthermore, the overlap follows the lower bound,
\begin{equation}\label{eq: ovlp2}
     \abs{\braket{\chi_m }{\psi} }^2 \geq 1 -  \Big( \frac{ C  \tau^{p}}{2\Delta E} \Big)^2.
\end{equation}
Here $m$ is the integer where the minimum in Eq. \eqref{eq: mmin} is achieved.  $\Delta E$ is defined as,
\begin{equation}\label{eq: de'}
    \Delta E = \min_{m'\ne m} \mathrm{dist}( E_{m'}, \cE ).
\end{equation}

\end{lemma}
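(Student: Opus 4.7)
The plan is to regard $\ket{\psi}$ as an approximate eigenvector of $V$ with residual controlled by \eqref{eq: local-err}, and then transplant the argument of Lemma \ref{lem-r} from the real line to the unit circle. Since $H\ket{\psi}=2\pi\cE\ket{\psi}$ forces $U\ket{\psi}=e^{2\pi i\cE}\ket{\psi}$, hypothesis \eqref{eq: local-err} immediately gives $\norm{V\ket{\psi}-e^{2\pi i\cE}\ket{\psi}}\le C\tau^p$. Expanding $\ket{\psi}=\sum_{m'}c_{m'}\ket{\chi_{m'}}$ in the orthonormal eigenbasis of the unitary $V$, with $c_{m'}:=\braket{\chi_{m'}}{\psi}$, and applying Parseval to this residual yields the master inequality
\begin{equation*}
  \sum_{m'=1}^d \abs{e^{2\pi i E_{m'}}-e^{2\pi i\cE}}^2 \abs{c_{m'}}^2 \le C^2\tau^{2p}.
\end{equation*}

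The next step is to convert chord distance on the unit circle into the metric $\text{dist}(\cdot,\cdot)$. From the identity $\abs{e^{2\pi i\alpha}-e^{2\pi i\beta}}=2\abs{\sin\pi(\alpha-\beta)}$, periodicity of sine, and Jordan's inequality \eqref{jordan} applied to the integer representative that places $\pi\cdot\text{dist}(E_{m'},\cE)$ in $[0,\pi/2]$, one obtains the pointwise bound $\abs{e^{2\pi i E_{m'}}-e^{2\pi i\cE}}\ge 4\,\text{dist}(E_{m'},\cE)$. Substituting into the master inequality then reduces both claims to short arithmetic.

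For the eigenphase bound \eqref{eq: mmin}, set $d_*:=\min_{m'}\text{dist}(\cE,E_{m'})$ and use $\sum_{m'}\abs{c_{m'}}^2=1$ to get $16 d_*^2\le C^2\tau^{2p}$, which is even stronger than the stated $d_*\le\tfrac12 C\tau^p$. For the overlap bound \eqref{eq: ovlp2}, let $m$ be the index realizing $d_*$; by definition \eqref{eq: de'} every $m'\ne m$ satisfies $\text{dist}(E_{m'},\cE)\ge \Delta E$, so restricting the master sum to $m'\ne m$ and dropping the $m$-th term gives
\begin{equation*}
  16\Delta E^2\bigl(1-\abs{c_m}^2\bigr) = 16\Delta E^2 \sum_{m'\ne m}\abs{c_{m'}}^2 \le C^2\tau^{2p},
\end{equation*}
which rearranges to $\abs{\braket{\chi_m}{\psi}}^2\ge 1-C^2\tau^{2p}/(16\Delta E^2)$, implying the claimed bound with room to spare.

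The only delicate point is the translation from the chord distance to $\text{dist}$: one has to choose the right integer shift of $E_{m'}-\cE$ into $[-1/2,1/2]$ before invoking Jordan's inequality, and one needs the smallness hypothesis on $\tau^p$ to ensure that $d_*\le 1/2$ (so Jordan's inequality applies) and that the minimizer $m$ is uniquely defined with $\Delta E>0$ (so that \eqref{eq: de'} is a meaningful positive quantity). Beyond this bookkeeping I do not anticipate any real obstacle, since once the master inequality is established everything reduces to one-line estimates.
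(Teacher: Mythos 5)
Your proof is correct and follows essentially the same route as the paper: use $U\ket{\psi}=e^{2\pi i\cE}\ket{\psi}$, expand $\ket{\psi}$ in the eigenbasis of $V$, apply Parseval to the residual, and convert the chord distance to $\mathrm{dist}(\cdot,\cdot)$ via Jordan's inequality, then drop the $m$-th term against the gap $\Delta E$. The only difference is that you track the factor $\abs{e^{2\pi i\alpha}-e^{2\pi i\beta}}=2\abs{\sin\pi(\alpha-\beta)}$ exactly and so land on constants a factor of $2$ sharper than the stated bounds, which of course still implies the lemma.
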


\begin{proof}

We begin with the vectors, 
 \[ 
 \begin{aligned}
      V^\dagger U \ket{\psi} - \ket{\psi} & =
 \sum_{m=1}^d \big(e^{-2\pi i (E_m - \cE) } -1 \big)\ket{\chi_m} \braket{\chi_m }{\psi}, \\
 \Longrightarrow & \norm { U \ket{\psi} - V \ket{\psi} }^2 =  \sum_{m=1}^d  \abs{\braket{\chi_m }{\psi} }^2 \abs{ \sin^2 \pi (E_m - \cE) } \\
 & \geq   
\min_{1 \le m \le d} \abs{ \sin^2 \pi (E_m - \cE) } \geq 4 \min_{1 \le m \le d}  \mathrm{dist}(E_m, \cE)^2.  
 \end{aligned}
 \]
 The last step follows from the orthogonality of the eigenvectors $\ket{\chi_m}$ and the Jordan's inequality \eqref{jordan}. The error bound \eqref{eq: mmin} can now be obtained by using the approximation error bound \eqref{eq: local-err}.
 
Let us fix the index $m$ where the minimum is achieved.  The second part of the theorem can be checked by observing that,
 \begin{align*}
 \Big( C \tau^{{p}} \Big)^2 \geq & \norm{U \ket{\psi} -V \ket{\psi}}^2\\
   = & \norm{ V^\dagger U \ket{\psi} - \ket{\psi}}^2 \\
   \geq& \sum_{m' \neq m}  \abs{\braket{\chi_{m'} }{\psi} }^2 \abs{ \sin^2 \pi (E_{m'} - \cE) }  \\ \geq&  \sum_{m' \neq m}  4 \abs{\braket{\chi_{m'} }{\psi} }^2  \Delta E^2, \\
  \Longrightarrow & 
  \abs{\braket{\chi_m }{\psi} }^2 
  = 1 - \sum_{m' \neq m}  \abs{\braket{\chi_{m'} }{\psi} }^2 
  \geq 1 - \Big( \frac{C \tau^{p} }{2\Delta E} \Big)^2.
 \end{align*}

 
\end{proof}

In light of the similar lower bounds \eqref{eq: ovlp1} and \eqref{eq: ovlp2},  we conclude that: 
\begin{theorem}\label{thm3}
 Suppose that an approximate Hamiltonian method $V$ is implemented in the phase estimation algorithm. Under the condition that \eqref{eq: local-err} is small enough,  the quantum algorithm returns an estimation with {error less or equal to $2^{-n}$ and} with   success probability at least $1-\eps$, using at most $t$ qubits
  \begin{equation}\label{eq: t3}
       t > n + \log \big(2 + \frac{ \big(C \tau^{p+1}\big)^2 }{2  \Delta\!E^2 \eps} \big). 
   \end{equation}
Here $ \Delta\!E$ is defined in \eqref{eq: de'}.   
\end{theorem}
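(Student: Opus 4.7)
The argument should parallel the proof of Theorem \ref{eq: thm2}, with Lemma \ref{lmm2} playing the role that Lemma \ref{lem-r} played there. I would first invoke Lemma \ref{lmm2} to fix the index $m$ attaining the minimum in \eqref{eq: mmin}, so that $E_m$ is the eigenphase of $V$ nearest to the true eigenphase $\cE$, with $\mathrm{dist}(\cE,E_m)\leq \tfrac12 C\tau^{p}$ and, more importantly, with the overlap bound \eqref{eq: ovlp2} controlling $\sum_{m'\neq m}\abs{\braket{\chi_{m'}}{\psi}}^2$ by $(C\tau^{p})^2/(4\Delta E^2)$.

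Since the circuit actually applies $V$ instead of $U$, I would then expand $\ket{\psi}=\sum_{m'}\braket{\chi_{m'}}{\psi}\ket{\chi_{m'}}$ in the eigenbasis of $V$ and propagate \eqref{eq: qpe0}--\eqref{eq: qpe3} with $V$ substituted for $U$. The resulting output state has exactly the structure of \eqref{phi3} after replacing each $\cE_{m'}$ by $E_{m'}$ and each $\ket{\psi_{m'}}$ by $\ket{\chi_{m'}}$. Choosing $b$ as the best $t$-bit approximation of $E_m$, so that $\abs{b/2^t-E_m}\leq 2^{-t}$, makes the amplitudes $\alpha_{j,m'}$ into geometric sums of exactly the form \eqref{eq: alphajm}.

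The heart of the estimate is then to bound $\mathbb{P}(\abs{m^*-b}>\ell)$ by splitting the double sum over $(j,m')$. For $m'\neq m$ the overlap bound \eqref{eq: ovlp2} from Lemma \ref{lmm2} controls $\sum_{m'\neq m}\abs{\braket{\chi_{m'}}{\psi}}^2$, while the Nielsen--Chuang amplitude-sum bound $\sum_{|j|>\ell}\frac{\abs{\sin 2^t\pi\beta}^2}{2^{2t}\abs{\sin\pi\beta}^2}\leq \frac{1}{2(\ell-1)}$ used inside the proof of Theorem \ref{thm1} handles the $j$-sum. Multiplying these two factors, requiring the result to be at most $\eps$, and setting $\ell=2^{t-n}-1$ exactly as in the derivation of Theorem \ref{eq: thm2} yields the qubit count \eqref{eq: t3}.

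The main subtlety, rather than any hard calculation, is that the circuit naturally estimates an eigenphase $E_m$ of the \emph{implemented} operator $V$, not the desired phase $\cE$. The ``sufficiently small'' hypothesis on \eqref{eq: local-err} must therefore be strong enough that the additional separation $\mathrm{dist}(E_m,\cE)\leq \tfrac12 C\tau^{p}$ from \eqref{eq: mmin} is absorbed at the $2^{-n}$ scale, so that the final ``$n$ correct digits'' claim ultimately refers to $\cE$ itself. Once this condition is imposed, the rest is bookkeeping, and no ingredient beyond Lemma \ref{lmm2} and the amplitude-sum bound already used in the proofs of Theorems \ref{thm1} and \ref{eq: thm2} is needed.
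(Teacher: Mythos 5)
Your proposal is correct and follows essentially the same route as the paper, which proves Theorem \ref{thm3} simply by observing that the overlap bound \eqref{eq: ovlp2} of Lemma \ref{lmm2} plays exactly the role that \eqref{eq: ovlp1} played in the proof of Theorem \ref{eq: thm2}, so that argument carries over verbatim with $\delta$ replaced by $C\tau^{p}/2$. Your additional remark that the circuit targets an eigenphase $E_m$ of $V$ rather than $\cE$ itself, so the bias $\mathrm{dist}(E_m,\cE)\leq \tfrac12 C\tau^{p}$ must also be absorbed at the $2^{-n}$ scale for the ``$n$ correct digits'' claim, is a genuine point the paper leaves implicit in its ``sufficiently small'' hypothesis.
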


\begin{remark}{
If we simply combine \eqref{eq: ovlp1} and \eqref{eq: ovlp2}, we would get a $\tau^p$ term in  the estimate  \eqref{eq: t3}, rather than $\tau^{p+1}$. The slightly improved estimate is obtained by observing } that the  Trotter error \eqref{eq: local-err-k} did not play a direct role in the analysis. Instead, the error is represented by the overlap of the vectors \eqref{eq: ovlp2}. In the case when the approximation $V$ is composed of repeated one-step approximation, i.e.,
\begin{equation}
    V= ( V^{1/r} )^r,
\end{equation}
where $V^{1/r}$ corresponds to an approximation of $\ef{i\tau H}$, such as the one in \eqref{strang}, the analysis can start with the one-step error, $V^{1/r} - U^{1/r} = \CO{\tau^{p+1}}$. Therefore, in the estimate \eqref{eq: t3}, one can improve $p$ to $p+1.$ 
\end{remark}

By setting $\delta= C \tau^p$, this estimate coincides with that in Theorem \ref{eq: thm2}. 

By neglecting the constants in the logarithmic function, and the fact that $\tau=1/r$, we can further estimate $t$ as follows,
\[
 t \approx n  + \log \frac{1}{ \Delta\!E^2 r^{2p} \eps}. 
\]
This can be used as a guideline to choose the step number $r,$
\begin{equation}\label{r-scale}
 r \ge   \frac{e^{ \frac{n-t}{2p}}}{\Delta\!E^{\frac1p} \eps^{\frac1{2p}}}.
\end{equation}

\section{Random Unitary Approximations}

Another interesting framework in Hamiltonian simulations is to approximate the unitary operator  by a random sampling. 
An implementation of this algorithm, known as the QDRIFT method \cite{campbell2019random,ouyang2020compilation}, can be described as follows. First we construct a discrete probability for sampling individual Hamiltonian terms,
\begin{equation}\label{eq: p-lam}
    p_\ell  = \frac{\norm{h_\ell}}{\lambda}, \ell=1,2,\cdots, L.  \quad 
\lambda := \sum_{\ell=1}^L  \norm{h_\ell}.
\end{equation}
 Then, for each step of the approximation, one picks a Hamiltonian term $h_\ell$ with probability $p_\ell$, and let $W_j= \exp(i\tau h_\ell/p_\ell). $   Notice that the index $\ell$ is random. The result of repeating this approximation $r$ times is a product form, given by,
\begin{equation}\label{eq: rtrott}
    U \approx V_r, \quad V_r=W_r W_{r-1} \cdots W_1. 
\end{equation}
By using a Taylor expansion, Campbell \cite{campbell2019random} showed that on average, this approximation is consistent, {in the sense that the one step error is at most $\mathcal{O}(\tau^2)$ \cite{deuflhard2002scientific}.} The most appealing aspect  of this type of methods is that at each step, only {\it one} Hamiltonian term needs to be implemented. On the other hand, it is also possible to draw several Hamiltonian terms at each step, or treat some dominant terms always in a deterministic manner. These generalizations can improve the statistical accuracy by reducing the variance \cite{jin2021partially}. 

To abstract out the specific details of the sampling methods, we simply consider a formalism where $\{W_\gamma\}_{\gamma \ge 0}$ are random, matrix-valued variables.  In addition, they are independent  and identically distributed (i.i.d.) unitary matrices. 

To ensure that such $V_r$ is consistent with $U$ on average,  we make the first assumption that, for some integer $q>1$, the one-step error is bounded by,
\begin{equation}\label{eq: A1}
  \norm{(\overline W  - U^{1/r} ) \ket{\psi} } \leq C \lambda^2 \tau^q, \;      \overline{W}:= \bbe\left[ W \right].    
\end{equation}
The constant $C$ is independent of $\tau.$ The order condition ensures the statistical consistency of the random approximation. 
For instance, the analysis of \cite{chen2020quantum} revealed an error bound of $ \CO{\lambda^2\tau^2}$ with $\lambda$ from \eqref{eq: p-lam}. In particular, $\lambda^2$ represents the variance of the {importance sampling} of the Hamiltonian terms \cite{jin2021partially}. In this case, we can set  $q=2.$ 
Due to independence, 
\begin{equation}\label{V-mean}
\overline{V}= \bbe\left[ V_r \right] =\bbe\left[ W \right]^r
\end{equation}
 Thus, a simple extension of this error bound is, for each $k\in \mathbb{N},$
\begin{equation}
  \norm{(\overline{V}^{2^k}  - U^{2^k} ) \ket{\psi} } \leq C r 2^ k \lambda^2 \tau^q. 
  \end{equation}
However, this error grows rather quickly, and we will rely on a concentration inequality instead, as shown below.

{
We now apply \eqref{eq: rtrott} to the phase estimation method, and in particular,  for each qubit in the first quantum register, we replace {$U^{2^k}$ by $Q_k $ which is defined as,
\begin{equation}\label{Qk}
 Q_k = \prod_{j=1}^{r2^k}{W}_{j}.
\end{equation}
{The basic idea is to separate the matrix $V_r$ into its mean $\overline{V}_r$ \eqref{V-mean}. Following \eqref{eq: V-SD}, one can express $\overline{V}_r$ in a spectral decomposition form with eigenvectors $\ket{\chi_m}$ associated with 
phase values $E_m$. In light of the mean error bound \eqref{eq: A1}, the eigenvectors of $\overline{V}_r$, according to Lemma \ref{lmm2}, have overlap with those of $U$.   
After applying the controlled-$U$ gates, we find that,
\begin{align}\label{eq: phi2-rand}
\ket{\phi_2}= &  \ket{\widehat{\phi}_2} + \ket{\widetilde{\phi}_2}, \\ 
\ket{\widehat{\phi}_2} = & \dfrac{1}{2^{t/2}}   \sum_{k=0}^{2^t-1}   \sum_{m=1}^d  e^{2\pi i k E_m } \braket{\psi_m}{\psi} \ket{k}.
\end{align}}
where we have denoted the two terms on the right hand side by $\ket{\widehat{\phi}_2}$ and $ \ket{\widetilde{\phi}_2}$, respectively. In particular, $\ket{\widehat{\phi}_2}$ is the quantum state when the $U$ is replaced by $\overline{V}$ in the PEA.  Similar to the Trotter approximation, this is a deterministic approximation.  In light of Lemma \ref{lmm2} and Eq. \eqref{eq: A1}, we still have the estimate for the overlap (when $E_m$ is the closest to the exact phase value), 
\begin{equation}\label{eq: over1}
 \abs{\braket{\chi_m }{\psi} }^2 \geq 1 -  \Big( \frac{ C  \lambda^2 \tau^{q}}{2\Delta E} \Big)^2.
\end{equation}
This part is the same as the analysis in Theorem \ref{thm3}.

On the other hand, $ \ket{\widetilde{\phi}_2}$ can be regarded as a fluctuation. Due to the independence of the random matrices, $ \ket{\widetilde{\phi}_2}$ has zero mean. More importantly, the product of the random unitary matrices has a tendency to concentrate around the mean. Toward this end,
we define $f= 2 \text{Re} \braket{\widehat{\phi}_2}{\widetilde{\phi}_2}$.  {Since $\ket{\widehat{\phi}_2}$ is deterministic, one has $\mathbb{E}[f]=0$. 
Therefore a concentration inequality would provide the probability of $f$ being close to zero. Furthermore,  
the fact that both  $\ket{\widehat{\phi}_2}$ and $\ket{{\phi}_2}$ are unit vectors implies that  $\braket{\widetilde{\phi}_2}= -{f},$ 
and this can be used to estimate the success probability  of obtaining  $ \ket{\widehat{\phi}_2}$ in \eqref{eq: phi2-rand}. To proceed in this direction, we rewrite \eqref{eq: phi2-rand} as,
\begin{equation}\label{eta}
\ket{{\phi}_2}=   \eta \ket{\widehat{\phi}_2} + \ket{\widehat{\phi}_2^\perp}, \quad \eta= \braket{\widehat{\phi}_2}{{\phi}_2}
\end{equation}
where $\ket{\widehat{\phi}_2^\perp}$ is regarded as a `garbage' state. In order to apply the analysis in Theorem \ref{thm3} to $\ket{\widehat{\phi}_2} $ and obtain a similar result, we enforce the event $\abs{\eta}^2 > 1 - \epsilon,$ with large probability. To fulfill this condition, we notice from \eqref{eq: phi2-rand} and  \eqref{eta} that $$\abs{\eta} = \abs{1 + \braket{\widehat{\phi}_2}{\widetilde{\phi}_2}}  \geq 1 - \abs{\braket{\widehat{\phi}_2}{\widetilde{\phi}_2}} \geq 1 - \sqrt{\braket{\widetilde{\phi}_2}{\widetilde{\phi}_2}} = 1 - \abs{f}^{1/2}. $$ 
In light of this, we will analyze the probability that,
\begin{equation}\label{f-ieq}
  |f|< \epsilon^2/4,
\end{equation}
which implies that $\abs{\eta}^2 > 1 - \epsilon.$

Overall, $f$ can be regarded as a function of the random matrices $W_1, W_2, \cdots, W_{r2^{t-1}}$. We first notice that, in the $j$th application of the random algorithm $W_j$, if a different Hamiltonian is selected, then the error is at most \cite{jin2021partially},
\[ \norm{W_j - W_j'} \leq 2 \lambda \tau,  \; \; \forall  1\leq j \leq r 2^{t-1}, \]
where we have used $W'$ for the unitary operator that corresponds to a different sample of the Hamiltonian terms. Due to the unitary property, this implies that for each $0 \leq k < t$, the controlled-$U$ operator has a perturbation with the same bound,
\begin{equation}
\norm{W_{r2^k} \cdots W_{j+1} W_j W_{j-1}\cdots W_1 -W_{r2^k}\cdots W_{j+1} W_j' W_{j-1} \cdots W_1} \leq 2 \lambda \tau.\end{equation}

 If we further denote the resulting function value (by substituting one $W_j$ in the $k$th step of the controlled-$U$ gate)  by $f'$, then we have the following rough bound by using triangle inequality and $\tau=1/r$,
\[ |f - f'| \leq {2 \lambda \tau } \Rightarrow  \sum_{j=1}^{r 2^{t-1}}  |f - f_j'|^2 \leq  { 2^{t+1}
 r \lambda^2 \tau^2 } = { 2^{t+1}
  \lambda^2/r }.  \]}
This observation can be used to invoke the McDiarmid  inequality \cite{mcdiarmid1989method}, showing that, for any $\nu>0$,
\begin{equation}\label{eq: mcd}
\mathbb{P}\left( \abs{f}> \nu\right) < \exp \Big( -\frac{r \nu^2}{  2^{t+1} \lambda^2} \Big).
\end{equation}

To maintain an overall success probability of $1-\epsilon$, we set 
\[ \epsilon \geq 2 \exp \Big( -\frac{r \nu^2}{  2^{t+1} \lambda^2} \Big),\]
which yields a lower bound for $r$,
\begin{equation}\label{eq: r-bd}
r \geq \frac{2^{t+1} \lambda^2 \log \frac{2}{\epsilon}}{\nu^2}.
\end{equation} 
 Such type of lower bounds have also been  obtained in \cite{Chen2021,jin2021partially} as a gate complexity estimate. The term $2^t$ can be viewed as the time duration of the Hamiltonian simulation.

{
To apply the inequality \eqref{eq: mcd} to \eqref{f-ieq}, we set $\nu= \epsilon^2/4,$ which changes the bound \eqref{eq: r-bd} to,
\begin{equation}\label{eq: r-bd'}
r \geq \frac{2^{t+5} \lambda^2 \log \frac{2}{\epsilon}}{\epsilon^4}.
\end{equation} 
 Thus increasing $t$ also has an effect on the number of steps $r$. The scaling $\epsilon^{-4}$ seems to be more restrictive than the scaling from the deterministic method \eqref{r-scale}. On the other hand, the random algorithm only involves one Hamiltonian term at each time step. }

 Collecting these results, we have the estimate on $t$.
\begin{theorem}
  The quantum algorithm using random unitaries returns an estimation with {error less or equal to $2^{-n}$ and}  with success probability at least $1-\eps$ using $t$ qubits with $t$ in the following range: 
    \begin{equation}\label{eq: t4}
 \log \frac{r\epsilon^4}{\lambda^2\log \frac{2}\epsilon} - 5 \geq    t \geq n + \log \big(2 + \frac{ C^2 }{2  \Delta\!E^2 r^{2q}\eps} \big). 
   \end{equation}
   
\end{theorem}
A finite range for $t$ can always be found by increasing $r$,
since the upper bound is an increasing function of $r$, and the lower bound decreases with $r$.  
}

\section*{Acknowledgement} {The author's research on quantum algorithms is supported by the National Science Foundation Grants DMS-2111221. }

\bigskip

\bibliographystyle{plain}
\bibliography{qcomp,PEA,trotter}

\begin{thebibliography}{10}

\bibitem{An2021}
Dong An, Di~Fang, and Lin Lin.
\newblock Time-dependent hamiltonian simulation of highly oscillatory dynamics.
\newblock {\em arXiv preprint arXiv:2111.03103}, 2021.

\bibitem{an2021time}
Dong An, Di~Fang, and Lin Lin.
\newblock Time-dependent unbounded hamiltonian simulation with vector norm
  scaling.
\newblock {\em Quantum}, 5:459, 2021.

\bibitem{babbush2018encoding}
Ryan Babbush, Craig Gidney, Dominic~W Berry, Nathan Wiebe, Jarrod McClean,
  Alexandru Paler, Austin Fowler, and Hartmut Neven.
\newblock Encoding electronic spectra in quantum circuits with linear t
  complexity.
\newblock {\em Physical Review X}, 8(4):041015, 2018.

\bibitem{babbush_chemical_2015}
Ryan Babbush, Jarrod McClean, Dave Wecker, Alán Aspuru-Guzik, and Nathan
  Wiebe.
\newblock Chemical basis of {Trotter}-{Suzuki} errors in quantum chemistry
  simulation.
\newblock {\em Physical Review A}, 91(2), February 2015.

\bibitem{bauman2020toward}
Nicholas~P Bauman, Hongbin Liu, Eric~J Bylaska, Sriram Krishnamoorthy,
  Guang~Hao Low, Christopher~E Granade, Nathan Wiebe, Nathan~A Baker, Bo~Peng,
  Martin Roetteler, et~al.
\newblock Toward quantum computing for high-energy excited states in molecular
  systems: quantum phase estimations of core-level states.
\newblock {\em Journal of Chemical Theory and Computation}, 17(1):201--210,
  2020.

\bibitem{berni2015ab}
Adriano~A Berni, Tobias Gehring, Bo~M Nielsen, Vitus H{\"a}ndchen, Matteo~GA
  Paris, and Ulrik~L Andersen.
\newblock Ab initio quantum-enhanced optical phase estimation using real-time
  feedback control.
\newblock {\em Nature Photonics}, 9(9):577--581, 2015.

\bibitem{Berry2015}
Dominic~W Berry, Andrew~M Childs, and Robin Kothari.
\newblock Hamiltonian simulation with nearly optimal dependence on all
  parameters.
\newblock In {\em 2015 IEEE 56th Annual Symposium on Foundations of Computer
  Science}, pages 792--809. IEEE, 2015.

\bibitem{Berry2020}
Dominic~W Berry, Andrew~M Childs, Yuan Su, Xin Wang, and Nathan Wiebe.
\newblock Time-dependent hamiltonian simulation with $ l^1$-norm scaling.
\newblock {\em Quantum}, 4:254, 2020.

\bibitem{bian2019quantum}
Teng Bian, Daniel Murphy, Rongxin Xia, Ammar Daskin, and Sabre Kais.
\newblock Quantum computing methods for electronic states of the water
  molecule.
\newblock {\em Molecular Physics}, 117(15-16):2069--2082, 2019.

\bibitem{brassard2011optimal}
Gilles Brassard, Frederic Dupuis, Sebastien Gambs, and Alain Tapp.
\newblock An optimal quantum algorithm to approximate the mean and its
  application for approximating the median of a set of points over an arbitrary
  distance.
\newblock {\em arXiv preprint arXiv:1106.4267}, 2011.

\bibitem{brassard2002quantum}
Gilles Brassard, Peter Hoyer, Michele Mosca, and Alain Tapp.
\newblock Quantum amplitude amplification and estimation.
\newblock {\em Contemporary Mathematics}, 305:53--74, 2002.

\bibitem{campbell2019random}
Earl Campbell.
\newblock Random compiler for fast {Hamiltonian} simulation.
\newblock {\em Physical review letters}, 123(7):070503, 2019.

\bibitem{campbell2021early}
Earl~T Campbell.
\newblock Early fault-tolerant simulations of the hubbard model.
\newblock {\em Quantum Science and Technology}, 7(1):015007, 2021.

\bibitem{casares2021t}
PAM Casares, Roberto Campos, and MA~Martin-Delgado.
\newblock {T-Fermion}: A {non-Clifford} gate cost assessment library of quantum
  phase estimation algorithms for quantum chemistry.
\newblock {\em arXiv preprint arXiv:2110.05899}, 2021.

\bibitem{chappell2011precise}
James~M Chappell, Max~A Lohe, Lorenz Von~Smekal, Azhar Iqbal, and Derek Abbott.
\newblock A precise error bound for quantum phase estimation.
\newblock {\em Plos one}, 6(5):e19663, 2011.

\bibitem{chen2020quantum}
Chi-Fang Chen, Hsin-Yuan Huang, Richard Kueng, and Joel~A Tropp.
\newblock Quantum simulation via randomized product formulas: Low gate
  complexity with accuracy guarantees.
\newblock {\em arXiv preprint arXiv:2008.11751}, 2020.

\bibitem{Chen2021}
Yi-Hsiang Chen, Amir Kalev, and Itay Hen.
\newblock Quantum algorithm for time-dependent hamiltonian simulation by
  permutation expansion.
\newblock {\em PRX Quantum}, 2(3):030342, 2021.

\bibitem{childs2019faster}
Andrew~M Childs, Aaron Ostrander, and Yuan Su.
\newblock Faster quantum simulation by randomization.
\newblock {\em Quantum}, 3:182, 2019.

\bibitem{childs2019nearly}
Andrew~M Childs and Yuan Su.
\newblock Nearly optimal lattice simulation by product formulas.
\newblock {\em Physical review letters}, 123(5):050503, 2019.

\bibitem{childs2021theory}
Andrew~M Childs, Yuan Su, Minh~C Tran, Nathan Wiebe, and Shuchen Zhu.
\newblock Theory of {Trotter} error with commutator scaling.
\newblock {\em Physical Review X}, 11(1):011020, 2021.

\bibitem{cleve1998quantum}
Richard Cleve, Artur Ekert, Chiara Macchiavello, and Michele Mosca.
\newblock Quantum algorithms revisited.
\newblock {\em Proceedings of the Royal Society of London. Series A:
  Mathematical, Physical and Engineering Sciences}, 454(1969):339--354, 1998.

\bibitem{daskin2018direct}
Ammar Daskin and Sabre Kais.
\newblock Direct application of the phase estimation algorithm to find the
  eigenvalues of the {Hamiltonians}.
\newblock {\em Chemical Physics}, 514:87--94, 2018.

\bibitem{deuflhard2002scientific}
Peter Deuflhard and Folkmar Bornemann.
\newblock {\em Scientific computing with ordinary differential equations},
  volume~42.
\newblock Springer Science \& Business Media, 2002.

\bibitem{faehrmann2021randomizing}
Paul~K Faehrmann, Mark Steudtner, Richard Kueng, Maria Kieferova, and Jens
  Eisert.
\newblock Randomizing multi-product formulas for improved {Hamiltonian}
  simulation.
\newblock {\em arXiv preprint arXiv:2101.07808}, 2021.

\bibitem{harrow2009quantum}
Aram~W. Harrow, Avinatan Hassidim, and Seth Lloyd.
\newblock Quantum algorithm for solving linear systems of equations.
\newblock {\em Physical Review Letters}, 103(15):150502, October 2009.
\newblock arXiv: 0811.3171.

\bibitem{hastings2014improving}
Matthew~B Hastings, Dave Wecker, Bela Bauer, and Matthias Troyer.
\newblock Improving quantum algorithms for quantum chemistry.
\newblock {\em arXiv preprint arXiv:1403.1539}, 2014.

\bibitem{jin2021partially}
Shi Jin and Xiantao Li.
\newblock A partially random trotter algorithm for quantum hamiltonian
  simulations.
\newblock {\em arXiv preprint arXiv:2109.07987}, 2021.

\bibitem{kitaev1995quantum}
A~Yu Kitaev.
\newblock Quantum measurements and the abelian stabilizer problem.
\newblock {\em arXiv preprint quant-ph/9511026}, 1995.

\bibitem{Kivlichan2020}
Ian~D Kivlichan, Craig Gidney, Dominic~W Berry, Nathan Wiebe, Jarrod McClean,
  Wei Sun, Zhang Jiang, Nicholas Rubin, Austin Fowler, Al{\'a}n Aspuru-Guzik,
  et~al.
\newblock Improved fault-tolerant quantum simulation of condensed-phase
  correlated electrons via trotterization.
\newblock {\em Quantum}, 4:296, 2020.

\bibitem{Lin2016}
Lin Lin, Yousef Saad, and Chao Yang.
\newblock Approximating spectral densities of large matrices.
\newblock {\em SIAM review}, 58(1):34--65, 2016.

\bibitem{lin2020near}
Lin Lin and Yu~Tong.
\newblock Near-optimal ground state preparation.
\newblock {\em Quantum}, 4:372, 2020.

\bibitem{lin2022heisenberg}
Lin Lin and Yu~Tong.
\newblock Heisenberg-limited ground-state energy estimation for early
  fault-tolerant quantum computers.
\newblock {\em PRX Quantum}, 3(1):010318, 2022.

\bibitem{Lloyd1996}
Seth Lloyd.
\newblock Universal quantum simulators.
\newblock {\em Science}, 273(5278):1073--1078, 1996.

\bibitem{Low2017}
Guang~Hao Low and Isaac~L Chuang.
\newblock Optimal hamiltonian simulation by quantum signal processing.
\newblock {\em Physical review letters}, 118(1):010501, 2017.

\bibitem{Low2019}
Guang~Hao Low and Isaac~L Chuang.
\newblock Hamiltonian simulation by qubitization.
\newblock {\em Quantum}, 3:163, 2019.

\bibitem{Low2018}
Guang~Hao Low and Nathan Wiebe.
\newblock Hamiltonian simulation in the interaction picture.
\newblock {\em arXiv preprint arXiv:1805.00675}, 2018.

\bibitem{mcdiarmid1989method}
Colin McDiarmid.
\newblock On the method of bounded differences.
\newblock {\em Surveys in combinatorics}, 141(1):148--188, 1989.

\bibitem{moore2021statistical}
Alexandria~J Moore, Yuchen Wang, Zixuan Hu, Sabre Kais, and Andrew~M Weiner.
\newblock Statistical approach to quantum phase estimation.
\newblock {\em arXiv preprint arXiv:2104.10285}, 2021.

\bibitem{mosca1999quantum}
Michele Mosca.
\newblock {\em Quantum computer algorithms}.
\newblock PhD thesis, University of Oxford. 1999., 1999.

\bibitem{nielsen2002quantum}
Michael~A Nielsen and Isaac Chuang.
\newblock Quantum computation and quantum information, 2002.

\bibitem{ouyang2020compilation}
Yingkai Ouyang, David~R White, and Earl~T Campbell.
\newblock Compilation by stochastic {Hamiltonian} sparsification.
\newblock {\em Quantum}, 4:235, 2020.

\bibitem{poulin_trotter_2014}
David Poulin, M.~B. Hastings, Dave Wecker, Nathan Wiebe, Andrew~C. Doherty, and
  Matthias Troyer.
\newblock The {Trotter} {Step} {Size} {Required} for {Accurate} {Quantum}
  {Simulation} of {Quantum} {Chemistry}.
\newblock {\em arXiv:1406.4920 [quant-ph]}, June 2014.
\newblock arXiv: 1406.4920.

\bibitem{poulin2018quantum}
David Poulin, Alexei Kitaev, Damian~S Steiger, Matthew~B Hastings, and Matthias
  Troyer.
\newblock Quantum algorithm for spectral measurement with a lower gate count.
\newblock {\em Physical review letters}, 121(1):010501, 2018.

\bibitem{rendon2021effects}
Gumaro Rendon, Taku Izubuchi, and Yuta Kikuchi.
\newblock Effects of cosine tapering window on quantum phase estimation.
\newblock {\em arXiv preprint arXiv:2110.09590}, 2021.

\bibitem{saad2011numerical}
Yousef Saad.
\newblock {\em Numerical methods for large eigenvalue problems: revised
  edition}.
\newblock SIAM, 2011.

\bibitem{shor1999polynomial}
Peter~W Shor.
\newblock Polynomial-time algorithms for prime factorization and discrete
  logarithms on a quantum computer.
\newblock {\em SIAM review}, 41(2):303--332, 1999.

\bibitem{Su2021}
Yuan Su, Dominic~W Berry, Nathan Wiebe, Nicholas Rubin, and Ryan Babbush.
\newblock Fault-tolerant quantum simulations of chemistry in first
  quantization.
\newblock {\em PRX Quantum}, 2(4):040332, 2021.

\bibitem{tansuwannont2019quantum}
Theerapat Tansuwannont, Surachate Limkumnerd, Sujin Suwanna, and Pruet
  Kalasuwan.
\newblock Quantum phase estimation algorithm for finding polynomial roots.
\newblock {\em Open Physics}, 17(1):839--849, 2019.

\bibitem{temme2011quantum}
Kristan Temme, Tobias~J Osborne, Karl~G Vollbrecht, David Poulin, and Frank
  Verstraete.
\newblock Quantum metropolis sampling.
\newblock {\em Nature}, 471(7336):87--90, 2011.

\bibitem{wan2021randomized}
Kianna Wan, Mario Berta, and Earl~T Campbell.
\newblock A randomized quantum algorithm for statistical phase estimation.
\newblock {\em arXiv preprint arXiv:2110.12071}, 2021.

\bibitem{Watkins2022}
Jacob Watkins, Nathan Wiebe, Alessandro Roggero, and Dean Lee.
\newblock Time-dependent hamiltonian simulation using discrete clock
  constructions.
\newblock {\em arXiv preprint arXiv:2203.11353}, 2022.

\bibitem{whitfield2011simulation}
James~D Whitfield, Jacob Biamonte, and Al{\'a}n Aspuru-Guzik.
\newblock Simulation of electronic structure hamiltonians using quantum
  computers.
\newblock {\em Molecular Physics}, 109(5):735--750, 2011.

\bibitem{wille2019ibm}
Robert Wille, Rod Van~Meter, and Yehuda Naveh.
\newblock {IBM’s} {Qiskit} tool chain: Working with and developing for real
  quantum computers.
\newblock In {\em 2019 Design, Automation \& Test in Europe Conference \&
  Exhibition (DATE)}, pages 1234--1240. IEEE, 2019.

\bibitem{xiao2019continuous}
Tailong Xiao, Jingzheng Huang, Jianping Fan, and Guihua Zeng.
\newblock Continuous-variable quantum phase estimation based on machine
  learning.
\newblock {\em Scientific reports}, 9(1):1--13, 2019.

\end{thebibliography}

\end{document}